\newcommand{\removelatexerror}{\let\@latex@error\@gobble}
\newtheorem{assumption}{Assumption}
\newtheorem{definition}{Definition}
\newtheorem{theorem}{Theorem}
\newtheorem{proposition}{Proposition}
\newtheorem{remark}{Remark}
\newacronym{cpjmp}{CPJMP}{Controlled Pure Jump Markov Process}
\begin{document}

\title{Timing-Aware Two-Player Stochastic Games with Self-Triggered Control}

\author{Yunian Pan \& Quanyan Zhu \\
ECE Department\\
  New York University \\
  Brooklyn, NY\\
  \texttt{\{yp1170, qz494\}@nyu.edu}}

  \maketitle 
  
\begin{abstract}
We study self-triggered two-player stochastic games on Piecewise Deterministic Markov Processes (PDMPs) where each agent decides when to observe and which open-loop action to hold. Augmenting the state with clocks and committed controls yields flow regions (both hold) and trigger surfaces (at least one updates). The framework covers both blind simultaneous (Nash) timing and observable sequential (Stackelberg) commitments;
 the former leads to coupled, intractable QVIs, while the latter admits a nested
  Hamilton–Jacobi–Bellman quasi-variational inequality and a tractable dynamic-
  programming decomposition. We outline a computational scheme based on implicit
  differentiation of the follower’s fixed point. A pursuit–evasion example
  illustrates the strategic timing interaction.
\end{abstract}



\section{Introduction}

Self-triggered control has been as a fundamental paradigm for resource-efficient implementation of feedback control systems \citep{Heemels2012, AntaTabuada2010}. Unlike periodic sampling, self-triggered controllers determine the next sampling instant based on the current system state, enabling adaptive reduction of communication, computation, and actuation costs while maintaining performance guarantees. This flexibility is critical in energy-constrained cyber-physical systems (CPS), wireless sensor networks, and transportation networks~\cite{pan2023resilience}.

However, existing self-triggered control theory predominantly addresses benign stochastic disturbances or worst-case bounded errors under cooperation assumptions. In adversarial environments, including traffic assignment~\citep{pan2023stochastic,pan2022poisoned}, backdoor attack machine learning defense~\citep{pan2023first}, and competitive resource allocation, the system faces an intelligent opponent who can exploit the timing and parametrization of control updates. This creates a fundamentally game-theoretic problem where the defender must optimize not just \emph{what} to do, but \emph{when} to act, against a strategic adversary who operates under similar timing constraints.

This paper establishes a rigorous game-theoretic framework for \emph{adversarial self-triggered control} in continuous-time stochastic systems. We model the dynamics as Controlled Piecewise Deterministic Markov Processes (PDMPs), a general class encompassing hybrid systems where continuous flow is interrupted by stochastic or controlled jumps. Both the defender and the adversary operate in self-triggered mode, selecting inter-execution times and control parameters at discrete instants. This leads to a lifted state space that includes both players' physical commitments and residual clocks. We analyze \emph{Nash} (simultaneous) and \emph{Stackelberg} (leader-follower) information patterns, identifying the geometric structure of the "Game on Clocks" and providing a computational pathway to solve the resulting bilevel optimization problem. Our main contributions are: 
    (i) We formalize the adversarial self-triggered control problem as a two-player non-zero-sum game on PDMPs, and derive the associated Hamilton-Jacobi-Bellman (HJB) systems, characterizing the solution as a fixed point of coupled intervention operators.
    (ii) For the sequential (Stackelberg) information pattern, we prove the existence and uniqueness of the follower's response via a contraction mapping argument on the augmented state space. We further cast the leader's problem as a mathematical programming with equilibrium constraint.
    (iii)  To overcome the intractability of the coupled PDE systems, we develop a numerical scheme based on bi-level Approximate Dynamic Programming (ADP). We introduce implicit differentiation to estimate the hypergradient of the leader's objective through the follower's equilibrium constraints, enabling efficient gradient-based optimization of the self-triggered policy.

\subsection{Related Work}
\subsubsection{Self-Triggered and Event-Triggered Control}
The foundational work of \citet{AntaTabuada2010} established self-triggered control for deterministic nonlinear systems, where the next trigger time is pre-computed to guarantee Lyapunov stability. Extensions to stochastic systems \citep{Donkers2012, Molin2015} primarily address stabilization and $\mathcal{L}_2$-gain performance under exogenous disturbances. However, these works typically model disturbances as passive noise or worst-case signals with bounded variation. Our work departs from this literature by explicitly modeling the disturbance as a \emph{rational player} with their own self-triggering clock, leading to a "dual-clock" hybrid system where timing commitments are strategic variables.
Recent work on security in event-triggered control \citep{Senejohnny2017} has explored denial-of-service (DoS) attacks. Our framework differs by modeling the adversary as influencing the system dynamics continuously through held control actions (actuator attacks) rather than merely jamming the communication channel, and by solving for the equilibrium policies rather than just robust stability margins.
\subsubsection{Differential Games and Impulse Control}
Classical differential game theory \citep{Basar1999} typically assumes both players act continuously. Our setting introduces a hybrid structure where players commit to piecewise-constant actions that persist over dwell intervals. This connects our work to \emph{Impulse Games} \citep{chikrii2007differential,basei2022nonzero} and strategic information acquisition \citep{huang2021pursuit}, where players choose intervention times to reset the state. However, standard impulse games typically involve instantaneous state jumps (e.g., inventory replenishment). In contrast, our formulation involves ``parameter jumps'' (switching the vector field) and "clock resets," which creates a distinct transport-equation geometry in the HJB formulation. 
Furthermore, while stochastic games \citep{Filar1997} often assume symmetric discrete-time actions, our Stackelberg formulation leverages the time-scale separation inherent in self-triggering. This relates to recent advances in \emph{Feedback Stackelberg Equilibrium} for dynamic games, which we extend to the PDMP setting using implicit gradient estimation.

\section{Problem Formulation}
\label{sec:formulation}

We consider a class of stochastic systems modeled as Controlled Piecewise Deterministic Markov Processes (PDMP) on a general state space. The system is subject to two opposing inputs: a control input $u_1$ applied by Player~1 (the ``Defender''), and a disturbance or adversarial input $u_2$ applied by Player~2 (the ``Adversary''). Both players determine when to observe and refresh their open-loop actions by selecting self-triggered dwell times.

\subsection{System Dynamics and Polish State Space}

Let $(S, \mathcal{B}(S))$ be a Polish space representing the system states. The state $X(t) \in S$ evolves according to a PDMP characterized by a tuple $(S, \lambda, Q)$.

\textbf{Dynamics.}
{\it
For any fixed control inputs $u_1 \in A_1$ and $u_2 \in A_2$ (where $A_i$ are compact metric spaces), the evolution of the state is generated by the infinitesimal operator $\mathcal{L}^{u_1, u_2}$ acting on the domain of test functions $\phi \in \mathcal{D}(\mathcal{L})$:}
\begin{equation}
\label{eq:generator}
\begin{aligned}
    \mathcal{L}^{u_1, u_2} \phi(x) &  = \underbrace{f(x, u_1, u_2) \cdot \nabla \phi(x)}_{\text{Continuous Flow}}  \\ 
    & + \underbrace{\lambda(x, u_1, u_2) \int_S (\phi(y) - \phi(x)) Q(x, dy; u_1, u_2)}_{\text{Stochastic Jump}}.
\end{aligned}
\end{equation}
Here, $f$ is the drift vector field, $\lambda$ is the jump intensity function, and $Q(\cdot | x, u)$ is the post-jump transition kernel. We assume these functions satisfy standard Lipschitz and boundedness conditions to ensure the process is well-defined in the sense of probability law.

In this paper, we formulate the system on a specific subset of self-triggering mechanisms.

\begin{definition}
The players updates the control at discrete triggering instants $\{\tau_{i,k}\}_{k \ge 0}$. At each $\tau_{i,k}$, they observe the system (subject to the information pattern) and commit to:
\begin{enumerate}
    \item An \emph{inter-execution time} $T_{i,k} = \tau_{i,k+1} - \tau_{i,k} \in [\underline{T}_i, \overline{T}_i]$.
    \item A \emph{control parameter} $\theta_{i,k} \in \Theta_i$ defining the open-loop control trajectory $u_i(t) = \Gamma_i(t - \tau_{i,k}; \theta_{i,k})$ for the interval $[\tau_{i,k}, \tau_{i,k+1})$.
\end{enumerate}

\end{definition}

We consider a general class (not necessarily zero-sum) of performance functionals associated with the self-triggering setting.

\begin{definition}[Performance Functional]
The defender and the adversary $i \in \{1,2\}$ seeks to minimize, the infinite-horizon discounted costs:
\begin{equation}
\label{eq:cost}
\begin{aligned}
    J(x, \pi_1, \pi_2)&  = \mathbb{E}_x \bigg[
\int_0^\infty e^{-\gamma t} r_i\big(X(t), u_1(t), u_2(t)\big) dt
\\ 
& \qquad \quad + \sum_{k=0}^\infty e^{-\gamma \tau_{i, k}} g_i(\tau_{i, k+1} - \tau_{i, k})
\bigg], 
\end{aligned}
\end{equation}
where constant $\gamma > 0$ is the discount rate, $r_i$ and $g_i$ are the running cost functionals for state/controls and observations.

\end{definition}

\subsection{Augmented State Space and Information Structure}

The solution to the self-triggered control problem depends fundamentally on the \emph{information structure} at the triggering instant. 
To preserve \emph{Markovian property} of the self-triggered operations, we allow players to track the physical state as well as opponent's timing commitment and held control. We therefore lift the dynamics to
$\chi(t) = \big(x(t), \sigma_1(t), \theta_1(t), \sigma_2(t), \theta_2(t)\big) \in \mathcal{X} := S \times [0, \overline{T}_1] \times \Theta_1 \times [0, \overline{T}_2] \times \Theta_2,
$
where:
\begin{itemize}
    \item $x(t) \in S$ is the physical PDMP state.
    \item $\sigma_i(t) = \tau_{i,k_i+1} - t \in [0, \overline{T}_i]$ is Player $i$'s residual time until the next self-trigger (clock variable).
    \item $\theta_i(t) = \theta_{i,k_i} \in \Theta_i$ is the parameter defining Player $i$'s held open-loop control between triggers.
\end{itemize}
Between triggering instants both clocks decrease deterministically, $\dot{\sigma}_i(t) = -1$, while the control parameters remain constant. When Player $i$'s clock hits zero, they observe the physical state, select a new dwell time $T_i$ and parameter $\theta_i$, and reset $\sigma_i(\tau_{i,k_i}^+) = T_i$, $\theta_i(\tau_{i,k_i}^+) = \theta_i$.

\begin{definition}
\label{def:admissible}
Under perfect information structure, player $i$'s self-triggered policy is a measurable mapping $\pi_i: \mathcal{X} \to [\underline{T}_i, \overline{T}_i] \times \Theta_i$ that assigns $(T_i, \theta_i) = \pi_i(\chi)$ at each of its triggering instants. Once $(T_i, \theta_i)$ is chosen, the held control is $u_i(t) = \Gamma_i(t-\tau_{i,k_i}; \theta_i)$ for $t \in [\tau_{i,k_i}, \tau_{i,k_i}+T_i)$. The collection of admissible policies is denoted $\Pi_i$.
\end{definition}

\begin{remark}
    Under perfect information structure, at the decision timing, a player essentially only needs $x(t)$, $\sigma_{-i}(t)$, and $\theta_{-i}(t)$ as their previous commitment $\theta$ expires and their old clock $\sigma_i(t)$ is bound to be reset.
\end{remark}

\begin{assumption}[Regularity]
\label{ass:regularity}
The primitives satisfy:
\begin{enumerate}
    \item $r(\cdot)$ and $g(\cdot)$ are bounded, jointly continuous, and Lipschitz in the state argument.
    \item $\Gamma(\cdot; \theta)$ is measurable in $s$, continuous in $\theta$, and uniformly bounded for $\theta \in \Theta$, where $\Theta$ is compact.
    \item The drift $f$, intensity $\lambda$, and post-jump kernel $Q$ are jointly continuous and satisfy both linear growth and Lipschitz conditions that ensure the PDMP admits a unique strong solution under any measurable controls.
\end{enumerate}
\end{assumption}

\section{Solutions and Characterizations}
\subsection{Simultaneous Move Information Pattern (Nash)}
\label{sec:nashsolution}

In the simultaneous move information pattern, we assume that at any instant $t$, the players' strategies depend only on the current augmented state $\chi(t)$. We seek a pair of feedback Nash equilibrium policies $(\pi_1^*, \pi_2^*)$ such that each is optimal against the other. This leads to a system of coupled HJB equations.

The augmented state is partitioned into the flow region
$\Omega_{\text{flow}} = \{ \chi : \sigma_1 > 0, \sigma_2 > 0 \},$ and boundaries:  $\Gamma_i = \{\chi : \sigma_i = 0, \sigma_{-i} \geq 0\}$ where at least one of the players' clock expires. 
Information structures specify what is observed on each boundary (e.g., whether Player~2 sees Player~1's commitment before choosing its own). 
Let $V_1(\chi)$ and $V_2(\chi)$ denote the value functions for the Defender and Adversary, respectively. Due to the self-triggered nature of the control, the control parameters $\theta_i$ are fixed while $\sigma_i > 0$. The decision-making is confined to the boundaries $\Gamma_i$.

To facilitate the dynamic programming formulation, we define the \emph{Intervention Operators} $\mathcal{M}_i$. These operators characterize the optimization problem solved by a player when their clock expires.
\begin{definition}[Intervention Operators]

For any test function $\phi: \mathcal{X} \to \mathbb{R}$ and state $\chi \in \Gamma_i$ (where $\sigma_i = 0$), the operator $\mathcal{M}_i$ is defined as:\begin{equation}\mathcal{M}_i \phi = \inf_{(T_i, \theta_i) \in [\underline{T}_i, \overline{T}_i] \times \Theta_i} \left\{ g_i(T_i) + \phi(x, \hat{\sigma}_1, \hat{\theta}_1, \hat{\sigma}_2, \hat{\theta}_2) \right\},\end{equation}where the post-reset state components are updated as follows: $\hat{\sigma}_i = T_i$, $\hat{\theta}_i = \theta_i$, and for the non-acting player $j \neq i$, $\hat{\sigma}_j = \sigma_j, \hat{\theta}_j = \theta_j$.
\end{definition}

\begin{remark}[Ordering of Events]
While the operator $\mathcal{M}_i$ is defined uniformly, the subsequent evolution of the system depends on the relationship between the chosen dwell time $T_i$ and the opponent's residual time $\sigma_{-i}$. The time to the next decision epoch is $\tau_{next} = \min(T_i, \sigma_{-i})$.
\begin{itemize}
    \item If $T_i < \sigma_{-i}$, the trajectory returns to $\Gamma_i$, implying Player $i$ acts two consecutive times.
    \item If $T_i \ge \sigma_{-i}$, the trajectory hits $\Gamma_{-i}$ next, meaning Player $-i$ will interrupt Player $i$'s interval to update their own control.
\end{itemize}
This switching logic is implicitly captured by the viscosity solution of the HJB equation in the flow region $\Omega_{\text{flow}}$.
\end{remark}

The Feedback Nash equilibrium value functions $(V_1, V_2)$ must satisfy the following system of coupled quasi-variational inequalities. The system is ``nested'' in the sense that the boundary condition of one region serves as the initialization for the flow in the other.

\subsubsection{The Flow Region $(\Omega_{\text{flow}})$}
When $\sigma_1 > 0$ and $\sigma_2 > 0$, no decisions are made. The system evolves according to the PDMP dynamics and the deterministic decay of the clocks. For $i \in \{1, 2\}$:
\begin{equation}
\label{eq:HJB_flow}
\gamma V_i(\chi) + \frac{\partial V_i}{\partial \sigma_1}(\chi) + \frac{\partial V_i}{\partial \sigma_2}(\chi) - \mathcal{L}^{\theta_1, \theta_2} V_i(\chi) - r_i(x, \theta_1, \theta_2) = 0.
\end{equation}
Here, the transport terms $\frac{\partial}{\partial \sigma}$ arise from the clock dynamics $\dot{\sigma} = -1$.

\subsubsection{The Triggering Boundaries $(\Gamma_i)$}
When player $i$'s clock expires ($\sigma_i = 0$), they must act optimally given the current state of the opponent. More specifically, on boundaries $\Gamma_i$
\begin{equation}
\label{eq:HJB_bound1}
\begin{aligned}
    V_i(\chi) &= \mathcal{M}_i [V_1](\chi), \\
    V_{-i}(\chi) &= V_{-i}(x, T_i^*, \theta_i^*, \sigma_{-i}, \theta_{-i}),
\end{aligned}
\end{equation}
where $(T_1^*, \theta_1^*)$ is the argument minimizing the RHS of the first equation. Note that opponent's value function undergoes a discrete jump in its arguments (due to Player $i$'s update) but incurs no immediate impulse cost.

\subsubsection{Simultaneous Expiry $(\Gamma_1 \cap \Gamma_2)$}
In the rare event where $\sigma_1 = \sigma_2 = 0$, the players engage in a static game at the boundary. A Nash equilibrium pair $((T_1^*, \theta_1^*), (T_2^*, \theta_2^*))$ is chosen such that:
\begin{equation}
    (T_i^*, \theta_i^*) \in \arg\min_{(T_i, \theta_i)} \left\{ g_i(T_i) + V_i(x, T_1, \theta_1, T_2, \theta_2) \right\}.
\end{equation}
This constitutes a static game played on the value surface.

The solution to the coupled system \eqref{eq:HJB_flow} provides the Nash equilibrium strategies.

\begin{theorem}
Let $v_1, v_2 \in C^1(\Omega_{\text{flow}}) \cap C^0(\bar{\mathcal{X}})$ be a viscosity solution to the coupled system satisfying polynomial growth conditions. Then:
\begin{enumerate}
    \item $v_1$ and $v_2$ coincide with the value functions of the Feedback Nash game defined in \eqref{eq:cost}.
    \item The optimal self-triggered policies $\pi_i^*$ are generated by the arguments of the intervention operators $\mathcal{M}_i$ at the boundaries $\Gamma_i$.
\end{enumerate}
\end{theorem}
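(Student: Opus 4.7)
The plan is to run a classical verification argument for the augmented PDMP, combining Dynkin's formula on the flow region with the quasi-variational boundary conditions at each trigger. Fix $\chi_0 \in \bar{\mathcal{X}}$ and an admissible pair $(\pi_1,\pi_2)\in\Pi_1\times\Pi_2$, let $\{\tau_k\}_{k\ge 0}$ be the superposition of the two players' trigger instants, and let $\chi(t)$ denote the corresponding augmented process. In $\Omega_{\text{flow}}$ the full generator is $\mathcal{A}\phi = \mathcal{L}^{\theta_1,\theta_2}\phi - \partial_{\sigma_1}\phi - \partial_{\sigma_2}\phi$, so \eqref{eq:HJB_flow} reads $\gamma v_i - \mathcal{A} v_i = r_i$ in $\Omega_{\text{flow}}$.

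First I would apply Dynkin's formula to $t \mapsto e^{-\gamma t}v_i(\chi(t))$ on each open inter-trigger interval $[\tau_k,\tau_{k+1})$. Using the flow HJB in the interior and optional stopping for the compensated PDMP jump martingale gives
\begin{equation*}
    v_i(\chi(\tau_k^+)) = \mathbb{E}\!\left[ \int_{\tau_k}^{\tau_{k+1}^-} e^{-\gamma(s-\tau_k)} r_i(X(s),\theta_1,\theta_2)\, ds + e^{-\gamma(\tau_{k+1}-\tau_k)} v_i(\chi(\tau_{k+1}^-)) \,\Big|\, \mathcal{F}_{\tau_k} \right].
\end{equation*}
At a trigger $\tau_{k+1}\in\Gamma_j$, the boundary condition \eqref{eq:HJB_bound1} yields $v_j(\chi(\tau_{k+1}^-)) = \mathcal{M}_j v_j(\chi(\tau_{k+1}^-)) \le g_j(T_j) + v_j(\chi(\tau_{k+1}^+))$ for every feasible $(T_j,\theta_j)$, with equality at the argmin, while $v_{-j}(\chi(\tau_{k+1}^-)) = v_{-j}(\chi(\tau_{k+1}^+))$ whenever player $j$ plays the equilibrium action. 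Telescoping these identities over $k=0,\dots,N-1$, taking expectations, and sending $N\to\infty$ — with polynomial growth of $v_i$, boundedness of $r_i,g_i$, and $\tau_N \ge (N/2-1)\min(\underline{T}_1,\underline{T}_2)$ forcing $\mathbb{E}[e^{-\gamma\tau_N}v_i(\chi(\tau_N))]\to 0$ — yields the one-sided bound $v_i(\chi_0) \le J_i(\chi_0,\pi_i,\pi_{-i}^*)$ for every $\pi_i\in\Pi_i$.

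Next I would construct a measurable equilibrium selector on $\Gamma_i$: because $g_i$ is continuous, $v_i\in C^0(\bar{\mathcal{X}})$, and $[\underline{T}_i,\overline{T}_i]\times\Theta_i$ is compact, Berge's maximum theorem combined with a standard measurable-selection lemma produces $(T_i^*(\chi),\theta_i^*(\chi))$ attaining the infimum in $\mathcal{M}_i$. Substituting $\pi_i^*$ into the chain above converts every boundary inequality into equality, giving $v_i(\chi_0) = J_i(\chi_0,\pi_1^*,\pi_2^*)$. Combined with the one-sided bound this is precisely the Nash best-response condition and establishes conclusion~(1); the explicit form $\pi_i^*(\chi) = \arg\min\mathcal{M}_i(\chi)$ on $\Gamma_i$ gives (2). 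The simultaneous-expiry set $\Gamma_1\cap\Gamma_2$ has codimension two in the clock coordinates and is hit with probability zero when starting in $\Omega_{\text{flow}}$, so its static-game condition only ensures $v_i$ is well-defined there and contributes nothing to the verification expectations.

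The main obstacle will be that $v_i$ is only $C^1$ on $\Omega_{\text{flow}}$ and merely continuous on $\bar{\mathcal{X}}$, so Dynkin's formula does not apply verbatim across the trigger surfaces. I would circumvent this by applying It\^o/Dynkin strictly on the open inter-trigger intervals, taking left limits at $\tau_{k+1}^-$, and invoking the viscosity sub/supersolution inequalities against smooth test functions to upgrade the one-sided comparisons on $\Gamma_i$ — the same device used in standard verification theorems for impulse control. A secondary technical point is justifying the optional-stopping step for the compensated PDMP martingale and the measurability of the merged trigger stream, both of which follow routinely from the Lipschitz and boundedness hypotheses of Assumption~\ref{ass:regularity} and the strict lower bound $\underline{T}_i > 0$.
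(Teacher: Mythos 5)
Your proposal follows essentially the same verification route as the paper's own proof sketch: Dynkin's formula on each open inter-trigger interval combined with the flow HJB, the intervention-operator inequality on $\Gamma_i$ (with equality at the argmin), continuity of $v_{-j}$ across the opponent's equilibrium reset, telescoping over trigger epochs, and the discounted transversality limit yielding the one-sided bound $v_i(\chi_0)\le J_i(\chi_0,\pi_i,\pi_{-i}^*)$. You additionally fill in steps the paper's sketch leaves implicit — the Berge/measurable-selection construction of the equilibrium selector, the attainment argument closing the Nash best-response condition, the lower bound on $\tau_N$ justifying transversality, and the viscosity-solution workaround at the trigger surfaces — so this is the same approach, carried out more completely.
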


\begin{proof}[Proof Sketch]
We focus on Player 1, the argument for Player 2 is symmetric. Let $(v_1, v_2)$ be a solution to the coupled system \eqref{eq:HJB_flow}. Fix Player 2's strategy to the equilibrium policy $\pi_2^*$. Let the sequence of triggering instants for Player 1 be $\{\tau_k\}_{k\ge 0}$.

Consider the quantity $e^{-\gamma t} v_1(\chi(t))$. In the interval between triggers $t \in (\tau_k, \tau_{k+1})$, the state $\chi(t)$ evolves in the interior $\Omega_{\text{flow}}$. Applying Dynkin's formula (or the chain rule for PDMPs) gives:
\begin{equation*}
\begin{aligned}
     & \  \mathbb{E} \left[ e^{-\gamma \tau_{k+1}} v_1(\chi(\tau_{k+1}^-)) \right] - e^{-\gamma \tau_k} v_1(\chi(\tau_k^+)) \\
    & = \mathbb{E} \left[ \int_{\tau_k}^{\tau_{k+1}} e^{-\gamma s} \mathcal{L} [e^{\gamma s} e^{-\gamma s} v_1] ds \right].
\end{aligned}
\end{equation*}
Rearranging terms and substituting the HJB flow equation, we have:
    $e^{-\gamma \tau_k} v_1(\chi(\tau_k^+)) = $ $ \mathbb{E} \bigg[ \int_{\tau_k}^{\tau_{k+1}} e^{-\gamma s} r_1(\chi(s)) ds + e^{-\gamma \tau_{k+1}} v_1(\chi(\tau_{k+1}^-)) \bigg].$
This equation relates the value at the start of an interval to the running cost accumulated during flow plus the value at the end of the interval.

At a triggering instant $\tau_{k+1}$, the system hits a boundary.
\begin{itemize}
    \item \emph{Case A (Player 1 acts):} If $\chi(\tau_{k+1}^-) \in \Gamma_1$, the intervention operator $\mathcal{M}_1$ applies. By definition of the infimum in \eqref{eq:HJB_bound1}:
        $v_1(\chi(\tau_{k+1}^-)) \le g_1(T_1) + v_1(\chi(\tau_{k+1}^+)),$
    with equality holding if $T_1$ is chosen according to the optimal policy $\pi_1^*$.
    \item \emph{Case B (Player 2 acts):} If $\chi(\tau_{k+1}^-) \in \Gamma_2$, Player 2 updates. Since $v_1$ is continuous across Player 2's reset (Player 1 pays no cost), we have $v_1(\chi^-) = v_1(\chi^+)$.
\end{itemize}
Combining these, for any generic policy $\pi_1$, we have the inequality:
    $v_1(\chi(\tau_{k+1}^-)) \le \text{Cost}_1(\text{jump at } k+1) + v_1(\chi(\tau_{k+1}^+)).$

Substituting the inequality back into the Dynkin's formula and summing over $k=0$ to $N-1$, we arrive at
    $v_1(\chi(0)) \le \mathbb{E} \bigg[ \int_0^{\tau_N} e^{-\gamma s} r_1 ds + \sum_{k=0}^{N-1} e^{-\gamma \tau_{k+1}} g_1(\Delta \tau) + e^{-\gamma \tau_N} v_1(\chi(\tau_N)) \bigg].$
Taking the limit as $N \to \infty$, and invoking the transversality condition (guaranteed by the discount factor $\gamma$ and boundedness of $v_1$), the terminal term $e^{-\gamma \tau_N} v_1 \to 0$.
Thus, $v_1(\chi(0)) \le J_1(\chi(0), \pi_1, \pi_2^*)$. Since this holds for any $\pi_1$, $v_1$ is a lower bound.
\end{proof}

This formulation displays the computational difficulty: the value function $V_1$ depends on the optimal strategy of Player 2 embedded in the boundary condition of $V_2$, and vice versa. Mathematically, this constitutes a system of coupled Quasi-Variational Inequalities (QVI). Unlike standard control problems where the boundary condition is static (e.g., $V(x)=0$ at target), the boundary values here are functional solutions to the opponent's optimization problem. Consequently, the pair $(V_1, V_2)$ must be found as a fixed point of the coupled intervention map, typically requiring iterative relaxation schemes that may suffer from convergence issues without strict contraction properties.

\subsection{Sequential Move Information Pattern (Stackelberg)}
\label{sec:stackelberg}

In the sequential formulation, we adopt the \emph{Feedback Stackelberg} information structure. The game proceeds in two stages:
\begin{enumerate}
    \item  Prior to the game, the leader (Defender) announces policy $\pi_1: \mathcal{X} \to [\underline{T}_1, \overline{T}_1] \times \Theta_1$. This policy is binding and known to Player 2 for the entire infinite horizon.
    \item  the follower (Adversary) observes $\pi_1$ and acts as a rational optimal controller, solving for the best-response policy $\pi_2^*(\cdot; \pi_1)$ that minimizes their own cost.
\end{enumerate}

This structure decouples the game into an inner optimal self-triggered control problem for the follower and an outer optimization over policy space for the leader.

Given the fixed leader policy $\pi_1$, the adversary views Player 1's behavior not as a strategic uncertainty, but as a known state-dependent drift in the environment. 
Let $V_2^{\pi_1}(\chi)$ denote the adversary's value function conditioned on the leader's policy $\pi_1$. It satisfies the following HJB system:
\begin{enumerate}
    \item The local evolution in flow region $\Omega_{\mathrm{flow}}$ is governed by the held controls:
\begin{equation}\label{eq:stackelbergflow}
    \gamma V_2^{\pi_1} + \frac{\partial V_2^{\pi_1}}{\partial \sigma_1} + \frac{\partial V_2^{\pi_1}}{\partial \sigma_2} - \mathcal{L}^{\theta_1, \theta_2} V_2^{\pi_1} - r_2(x, \theta_1, \theta_2) = 0.
\end{equation}
\item  At boundaries, Player 2 either acts optimally to minimize their cost-to-go, or lets the transition be dictated by the leader's new commitment:
\begin{equation}\label{eq:stackelbergint}
\begin{aligned}
       V_2^{\pi_1}(\chi) & = \mathcal{M}_2 [V_2^{\pi_1}](\chi). \\
        V_2^{\pi_1}(\chi)& = V_2^{\pi_1}\big(x, T_1^{\pi}, \theta_1^{\pi}, \sigma_2, \theta_2\big).
\end{aligned}
\end{equation}
\end{enumerate}

The solution to this system yields the unique best-response policy $\pi_2^*(\cdot; \pi_1)$, defined by the argument minimizing the RHS of the intervention operator $\mathcal{M}_2$.

The leader seeks the optimal policy $\pi_1$ that minimizes their value function, anticipating the follower's optimal response, hence formulating a bilevel programming:
\begin{equation}
\label{eq:stackelberg_leader}
\begin{aligned}
    \min_{\pi_1 \in \Pi_1} \quad & J_1\big(x_0, \pi_1, \pi_2^*(\cdot; \pi_1)\big) \\
    \text{s.t.} \quad & \pi_2^*(\cdot; \pi_1) \text{ solves \eqref{eq:stackelbergflow}--\eqref{eq:stackelbergint} }
\end{aligned}
\end{equation}

\begin{remark}
A practical realization of this framework is Moving Target Defense (MTD). The leader commits to a configuration rotation policy $\pi_1$ (e.g., an IP hopping algorithm). The Adversary (Follower) observes this logic and selects a reconnaissance duration $T_2$ to identify vulnerabilities. Since the Adversary is physically committed to the scan for time $T_2$, the Defender's optimal policy $\pi_1$ exploits this latency, ensuring the system state rotates (invalidating the gathered intelligence) exactly as the Adversary's clock expires ($\sigma_2 \to 0$).
\end{remark}

\section{Dynamic Programming Principles }
\label{sec:dp}

In this section, we establish the Dynamic Programming Principle (DPP) for the self-triggered game. We reformulate the differential HJB systems as fixed-point problems involving integral operators. This operator-theoretic perspective provides the necessary conditions for the existence and uniqueness of solutions.

Let $\mathbb{M}(\bar{\mathcal{X}})$ denote the space of bounded measurable functions on the closed state space, equipped with the supremum norm $\|v\|_\infty = \sup_{\chi \in \bar{\mathcal{X}}} |v(\chi)|$.
We define $\mathcal{P}_t^{\theta}$ as the \emph{Markov Transition Semigroup} associated with the PDMP generator $\mathcal{L}^{\theta_1, \theta_2}$ (with fixed controls between triggers). For any bounded measurable function $v$ and time $t \ge 0$, the operator is given by the expected cost-to-go over the interval $[0, t]$:
\begin{equation*}
    \mathcal{P}_t^{\theta}[v](\chi) = \mathbb{E}_\chi \left[ \int_0^t e^{-\gamma s} r(X(s), \theta_1, \theta_2) ds + e^{-\gamma t} v(\chi(t)) \right],
\end{equation*}
where the expectation is taken with respect to the law of the PDMP starting at $\chi$ with constant controls $\theta$.
The essential ``step size'' in our self-triggered formulation is the time until the \emph{next} boundary hit, defined as: $\tau(\chi) = \min \{ \sigma_1, \sigma_2 \}$.

For the simultaneous move game, the value functions $V_1, V_2$ are coupled. We define the Nash Bellman operator $\mathfrak{N}: \mathbb{M}(\bar{\mathcal{X}})^2 \to \mathbb{M}(\bar{\mathcal{X}})^2$.
The operator acts on a pair of candidate value functions $\mathbf{v} = (v_1, v_2)$. For a state $\chi$, the operator evaluates the evolution until $\tau(\chi) = 0$, at which point the intervention operators $\mathcal{M}_i$ (defined in Sec. \ref{sec:nashsolution}) are applied.

\begin{definition}
The Nash operator $\mathfrak{N}[\mathbf{v}] = (\mathfrak{N}_1[\mathbf{v}], \mathfrak{N}_2[\mathbf{v}])$ is defined as:
\begin{equation*}
    \mathfrak{N}_i[v_1, v_2](\chi) = 
    \begin{cases}
        \mathcal{P}_{\tau(\chi)}^{\theta}[v_i](\chi) & \text{if } \chi \in \Omega_{\text{flow}}, \\
        \mathcal{M}_i[v_i](\chi) & \text{if } \chi \in \Gamma_i \setminus \Gamma_{-i}, \\
        \text{Val}_i\big( \texttt{Static}(v_1, v_2) \big) & \text{if } \chi \in \Gamma_1 \cap \Gamma_2.
    \end{cases}
\end{equation*}
\end{definition}

A pair of functions $(V_1, V_2)$ constitutes a Feedback Nash Equilibrium solution if and only if they are a fixed point of the operator $\mathfrak{N}$:
\begin{equation}
    (V_1, V_2) = \mathfrak{N}[V_1, V_2].
\end{equation}
The Nash operator $\mathfrak{N}$ is \emph{not} a contraction mapping in general due to the coupling at the boundaries, and that the dwell time optimization can be generally nonconvex.

\subsubsection{Stackelberg Operator Theory}
In the sequential formulation, the operator structure simplifies due to the hierarchical decoupling.
Given a fixed leader policy $\pi_1$, we define the follower's Bellman operator $\mathfrak{S}_{\pi_1}: \mathbb{M}(\bar{\mathcal{X}}) \to \mathbb{M}(\bar{\mathcal{X}})$.
\begin{equation}
    \mathfrak{S}_{\pi_1}[v](\chi) = 
    \begin{cases}
        \mathcal{P}_{\tau(\chi)}^{\theta}[v](\chi) & \text{if } \chi \in \Omega_{\text{flow}}, \\
        \inf_{T_2, \theta_2} \left\{ g_2(T_2) + v(\chi^+_{\text{reset}}) \right\} & \text{if } \chi \in \Gamma_2, \\
        v(\text{Update via } \pi_1(\chi)) & \text{if } \chi \in \Gamma_1.
    \end{cases}
\end{equation}

\begin{theorem}
\label{thm:contraction}
For a fixed $\pi_1$, if the discount rate $\gamma > 0$ and dwell times are bounded away from zero ($T_{min} > 0$), the operator $\mathfrak{S}_{\pi_1}$ is a contraction mapping on $\mathbb{M}(\bar{\mathcal{X}})$ with modulus $\beta = e^{-\gamma T_{min}}$.
\begin{enumerate}
    \item There exists a unique value function $V_2^{\pi_1}$ such that $V_2^{\pi_1} = \mathfrak{S}_{\pi_1}[V_2^{\pi_1}]$.
    \item The value iteration $v^{(k+1)} = \mathfrak{S}_{\pi_1}[v^{(k)}]$ converges to $V_2^{\pi_1}$ geometrically.
\end{enumerate}
\end{theorem}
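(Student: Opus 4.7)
The plan is to apply the Banach fixed-point theorem on the complete metric space $(\mathbb{M}(\bar{\mathcal{X}}), \|\cdot\|_\infty)$, from which both assertions fall out simultaneously: part (1) from the fixed-point conclusion, and part (2) from the standard a priori estimate $\|v^{(k)} - V_2^{\pi_1}\|_\infty \leq \beta^k \|v^{(0)} - V_2^{\pi_1}\|_\infty$ for iterates of a contraction. The proof decomposes into three stages: showing that $\mathfrak{S}_{\pi_1}$ is a self-map of $\mathbb{M}(\bar{\mathcal{X}})$, establishing the uniform contraction inequality with modulus $\beta = e^{-\gamma T_{min}} < 1$, and then invoking Banach.

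For the self-map step, I would use Assumption~\ref{ass:regularity} to conclude that $r$ and $g$ are bounded and that the PDMP semigroup $\mathcal{P}_t^\theta$ preserves bounded measurable functions. For the intervention branch on $\Gamma_2$, compactness of $[\underline{T}_2, \overline{T}_2] \times \Theta_2$ together with continuity of $g_2$ and of the reset map yields a measurable selector for the infimum via a standard measurable selection argument. A uniform bound follows from the coarse estimate $\|\mathfrak{S}_{\pi_1}[v]\|_\infty \leq \|v\|_\infty + \gamma^{-1}(\|r\|_\infty + \|g\|_\infty)$.

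The contraction step is the technical crux. For $v, w \in \mathbb{M}(\bar{\mathcal{X}})$ I would split by region. In $\Omega_{\text{flow}}$ the explicit discount in $\mathcal{P}_{\tau(\chi)}^\theta$ yields $|\mathfrak{S}_{\pi_1}[v](\chi) - \mathfrak{S}_{\pi_1}[w](\chi)| \leq e^{-\gamma \tau(\chi)} \|v - w\|_\infty$; on $\Gamma_2$, the elementary inequality $|\inf_a F(a) - \inf_a G(a)| \leq \sup_a |F(a) - G(a)|$ gives the bound $\|v - w\|_\infty$; on $\Gamma_1$ the deterministic evaluation gives the same. These piecewise bounds are not uniformly contractive, since $\tau(\chi)$ may vanish near boundaries and the intervention branches carry no discount in isolation. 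To close the argument I would treat $\mathfrak{S}_{\pi_1}$ as a semi-Markov Bellman operator per decision epoch: after any reset, the acting player's new clock lies in $[\underline{T}_i, \overline{T}_i] \subset [T_{min}, \overline{T}_i]$, so bundling a pure intervention with the immediately following flow segment produces a composite step whose effective elapsed time is at least $T_{min}$, delivering the discount $\beta = e^{-\gamma T_{min}}$. Equivalently, one may establish that an appropriate iterate $\mathfrak{S}_{\pi_1}^n$ is strictly contractive with rate $\beta^n$ and apply Banach's theorem for iterated contractions.

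The main obstacle will be exactly this combination step: the $\mathcal{M}_2$ and $\pi_1$-update branches are not contractive when viewed in isolation, because a pure intervention incurs no discounting. Overcoming this requires the semi-Markov epoch viewpoint together with the uniform dwell-time lower bound $\underline{T}_i \geq T_{min}$ from Definition~\ref{def:admissible}, applied uniformly over admissible leader policies $\pi_1 \in \Pi_1$. Once the uniform estimate $\|\mathfrak{S}_{\pi_1}[v] - \mathfrak{S}_{\pi_1}[w]\|_\infty \leq \beta \|v - w\|_\infty$ is secured, Banach's theorem delivers both claims of the theorem with no further work.
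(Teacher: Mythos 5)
Your proposal is correct and follows essentially the same route as the paper: both observe that the single-step operator fails to contract uniformly because the intervention branches elapse no time and $\tau(\chi)$ can vanish near the boundaries, and both recover a strict contraction for an iterate of $\mathfrak{S}_{\pi_1}$ from the dwell-time lower bound before invoking the Banach fixed-point theorem for iterated contractions. If anything, your leaving the iterate index $n$ unspecified is safer than the paper's claim that $\mathfrak{S}_{\pi_1}^2$ already suffices, since boundary applications of the operator advance no physical time and one may need to interleave two resets with two flow segments before accumulating the guaranteed $e^{-\gamma\min(\underline{T}_1,\underline{T}_2)}$ of discounting.
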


\begin{proof}[Proof Sketch]
We verify the contraction property using the discount factor accumulated over decision epochs. Let $\Delta \tau$ denote the time elapsed between two consecutive boundary hits.
Consider two candidate functions $u, v$. By the properties of the infimum and the linearity of the integral operator:
$$ \|\mathfrak{S}_{\pi_1}[u] - \mathfrak{S}_{\pi_1}[v]\|_\infty \le \sup_{\chi} \mathbb{E} \left[ e^{-\gamma \Delta \tau} \right] \|u - v\|_\infty. $$
While $\Delta \tau$ can be arbitrarily small in a single step (if $\sigma_{-i} \to 0$), the self-triggered admissibility condition requires that any reset sets the local clock to at least $\underline{T} > 0$. Consequently, for any trajectory, the cumulative time elapsed after at most two consecutive jumps (one by Player 2, one by Player 1) is lower-bounded by $\min(\underline{T}_1, \underline{T}_2)$. 
Therefore, the second iterate of the operator, $\mathfrak{S}_{\pi_1}^2$, is a strict contraction with modulus $\beta = e^{-\gamma \min(\underline{T}_1, \underline{T}_2)} < 1$. By the generalized Banach fixed point theorem, this guarantees the existence and uniqueness of the solution $V_2^{\pi_1}$. Hence the value iteration convergence is immediate.
\end{proof}

The leader's problem can now be stated as an optimization over the fixed points of the follower's operator. Let $\mathcal{F}: \Pi_1 \to \mathbb{M}(\bar{\mathcal{X}})$ be the map that assigns the unique fixed point to a policy: $\mathcal{F}(\pi_1) := \text{fix}(\mathfrak{S}_{\pi_1})$.
Then, the leader's DPP is to find:
$\pi_1^* = \arg \inf_{\pi_1 \in \Pi_1} J_1(x_0, \pi_1, \mathcal{F}(\pi_1)).$

\subsubsection{Approximate Dynamic Programming (ADP)} 
We therefore propose a numerical scheme based on ADP. We parameterize the leader's policy $\pi_1(\cdot; \xi)$ and the follower's value function $V_2(\cdot; \omega)$ using function approximators (e.g., neural networks) with weights $\xi \in \mathbb{R}^{d_\xi}$ and $\omega \in \mathbb{R}^{d_\omega}$. Finding a Stackelberg equilibrium then requires computing the \emph{hypergradient} of the leader's objective $\nabla_\xi \mathcal{J}_1 := J(\xi, \mathcal{F}(\xi))$, which captures how the follower's equilibrium shifts in response to the leader's policy.

\begin{proposition}[Implicit Differentiation]
\label{prop:implicit_gradient}
Let $\omega^*(\xi)$ be a fixed point of the follower's Bellman operator $\mathfrak{S}_\xi$.
Assume the strong regularity holds at the solution (i.e., the follower's optimal policy is unique and strict complementarity holds). 
Then, the mapping $\xi \mapsto \omega^*(\xi)$ is locally differentiable. The gradient of the leader's objective is given by:
\begin{equation}
\label{eq:hypergradient}
    \nabla_\xi \mathcal{J}_1 = \frac{\partial J_1}{\partial \xi} + \frac{\partial J_1}{\partial \omega} \left( I - \partial_\omega \mathfrak{S}_\xi \right)^{-1} \partial_\xi \mathfrak{S}_\xi,
\end{equation}
where $\partial$ denotes the Jacobian matrix.
\end{proposition}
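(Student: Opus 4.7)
The plan is to derive \eqref{eq:hypergradient} by applying the implicit function theorem to the parametric fixed-point equation defining $\omega^*(\xi)$, and then composing with the chain rule for $J_1$. Concretely, I would set $F(\xi,\omega) := \omega - \mathfrak{S}_\xi[\omega]$ so that the follower's equilibrium is characterized by $F(\xi,\omega^*(\xi)) = 0$. If $F$ is continuously differentiable near the solution and $\partial_\omega F(\xi,\omega^*(\xi)) = I - \partial_\omega \mathfrak{S}_\xi$ is nonsingular, the implicit function theorem yields a locally $C^1$ map $\xi\mapsto \omega^*(\xi)$ with
\begin{equation*}
    \partial_\xi \omega^*(\xi) \;=\; \bigl(I - \partial_\omega \mathfrak{S}_\xi\bigr)^{-1} \partial_\xi \mathfrak{S}_\xi,
\end{equation*}
and \eqref{eq:hypergradient} follows by differentiating $\mathcal{J}_1(\xi) = J_1(\xi,\omega^*(\xi))$ via the chain rule.

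The first substantive step is to verify invertibility of $I - \partial_\omega \mathfrak{S}_\xi$. Here I would lift the contraction estimate of Theorem~\ref{thm:contraction} to the parameter space: because $\mathfrak{S}_{\pi_1}^2$ is a contraction of modulus $\beta = e^{-\gamma\min(\underline{T}_1,\underline{T}_2)}$ in the sup norm, any induced Jacobian on a sufficiently rich parameterization inherits $\|(\partial_\omega \mathfrak{S}_\xi)^2\|\le \beta$ at the fixed point, so its spectral radius satisfies $\rho(\partial_\omega \mathfrak{S}_\xi) \le \beta^{1/2}<1$. A Neumann series then gives $(I-\partial_\omega\mathfrak{S}_\xi)^{-1} = \sum_{k\ge 0}(\partial_\omega\mathfrak{S}_\xi)^k$, establishing both invertibility and a uniform norm bound, which is what the implicit function theorem requires.

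The second and, in my view, the main obstacle is establishing $C^1$ regularity of $\mathfrak{S}_\xi$ in both arguments, since the follower's Bellman operator contains the nonsmooth minimization
\[
    \inf_{(T_2,\theta_2)}\{g_2(T_2) + v(\chi^+_{\text{reset}})\}
\]
on $\Gamma_2$ as well as a selection depending on $\pi_1(\chi;\xi)$ on $\Gamma_1$. Under the stated strong regularity, namely uniqueness of the follower's minimizer and strict complementarity of the bounds on $(T_2,\theta_2)$, I would invoke Danskin's theorem (or its refinement due to Bonnans--Shapiro) to conclude that the value of the infimum is differentiable in $(\xi,\omega)$ and that the gradient is obtained by freezing the minimizer and differentiating the integrand. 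The transport semigroup term $\mathcal{P}_{\tau(\chi)}^\theta[v]$ is smooth in $\theta$ and linear in $v$ under Assumption~\ref{ass:regularity} (Lipschitz/continuous primitives and a smooth parameterization $\Gamma_i$), so the chain rule applies component by component.

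With differentiability of $\mathfrak{S}_\xi$ and invertibility of $I - \partial_\omega \mathfrak{S}_\xi$ in hand, implicit differentiation of $F(\xi,\omega^*(\xi)) = 0$ yields $\partial_\xi\omega^*$ as above, and a final application of the chain rule to $\mathcal{J}_1(\xi) = J_1(\xi,\omega^*(\xi))$ produces \eqref{eq:hypergradient}. The computational payoff I would emphasize is that the hypergradient never requires unrolling the value iteration: one solves a single linear system $(I-\partial_\omega\mathfrak{S}_\xi)^\top y = (\partial_\omega J_1)^\top$ (by Neumann truncation or conjugate gradient) and then forms $\partial_\xi J_1 + y^\top\partial_\xi \mathfrak{S}_\xi$, which is the natural companion to the ADP scheme proposed in the previous subsection.
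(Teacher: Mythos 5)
The paper states Proposition~\ref{prop:implicit_gradient} without any proof, so there is no in-paper argument to compare against; your derivation via the implicit function theorem applied to $F(\xi,\omega)=\omega-\mathfrak{S}_\xi[\omega]$, plus Danskin-type differentiability of the intervention infimum under strong regularity and a final chain rule, is exactly the canonical route the authors evidently have in mind, and the overall structure is sound.

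The one step I would push back on is your transfer of the contraction estimate from Theorem~\ref{thm:contraction} to the Jacobian $\partial_\omega\mathfrak{S}_\xi$. The contraction in the paper is established for the Bellman operator acting on the function space $\mathbb{M}(\bar{\mathcal{X}})$ in the sup norm, whereas $\omega$ in the proposition indexes a finite-dimensional function approximator (the paper explicitly mentions neural networks), so $\partial_\omega\mathfrak{S}_\xi$ is the Jacobian of a \emph{projected} or \emph{parametric} Bellman map $\omega\mapsto\Pi\,\mathfrak{S}_\xi[V(\cdot;\omega)]$. It is well known that such projected operators need not inherit the contraction property for nonlinear (or even linear, if the projection is taken in a norm other than the one in which the operator contracts) parameterizations, so the claim that "any induced Jacobian on a sufficiently rich parameterization inherits $\|(\partial_\omega\mathfrak{S}_\xi)^2\|\le\beta$" does not follow from Theorem~\ref{thm:contraction} alone. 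The cleanest fix is either to restrict to a linear architecture with a non-expansive projection in the relevant norm, or simply to fold invertibility of $I-\partial_\omega\mathfrak{S}_\xi$ into the "strong regularity" hypothesis of the proposition, which is how such statements are usually made precise. With that adjustment (and noting that your spectral-radius argument $\rho(\partial_\omega\mathfrak{S}_\xi)^2\le\|(\partial_\omega\mathfrak{S}_\xi)^2\|$ is the right way to exploit the two-step contraction once the transfer is justified), the rest of your argument, including the adjoint linear-system implementation of the hypergradient, is correct and usefully more explicit than anything in the paper.
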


We can thereby adopt a bi-level iterative scheme, derived from the sensitivity analysis in Proposition \ref{prop:implicit_gradient}, to approximately solve \eqref{eq:stackelberg_leader}. A typical procedure can be the following: 
\begin{enumerate}
    \item \textbf{Inner loop}: 
    fix leader configuration $\xi_k$, we solve for the follower's rational response. This entails finding the fixed point $\omega^*$ of the parametric Bellman operator:
  $ \omega^* = \mathfrak{S}_{\xi_k}[\omega^*].$
    This step is typically performed via fixed-point iteration (Value Iteration) or by minimizing the Bellman residual through actor-critic.
    \item \textbf{Outer loop}: 
    Once the inner loop converges to $\omega^*$, we compute the gradient of the leader's objective $\mathcal{J}_1(\xi) = J_1(\xi, \omega^*(\xi))$. Using the result from Proposition \ref{prop:implicit_gradient}.
    The leader then updates parameters using iterative schemes. (e.g.,  $\xi_{k+1} = \xi_k - \alpha_k \nabla_\xi \mathcal{J}_1,$
    where $\alpha_k$ is some proper step-size sequence. This process repeats until the convergence to a local minimum.
\end{enumerate}

\section{Numerical Illustration}
\label{sec:simulation}

We illustrate the strategic timing policy in a 2D pursuit-evasion game. The state is the relative position/velocity $z=[p-e,\dot p-\dot e]\in\mathbb{R}^4$ governed by a double integrator with held accelerations between triggers. 
The pursuer (follower) and the evader (leader) choose their dwell times $T_1,T_2$ and hold acceleration control in between triggers. They observe each other's clock and acceleration control at their own triggers.

Costs are quadratic in $z$ and control (standard LQ pursuit-evasion), with a constant trigger cost (no dwell penalty). The pursuer and evader's timing/control policies are learned with the augmented-state HJB/actor--critic described earlier, using fully connected networks on $\chi=[z,\sigma_1,\sigma_2,u_1,u_2]$.
Fig.~\ref{fig:trigger-policy} shows the pursuer’s trigger sensitivity to the
  evader’s clock. Dwell time shrinks as the opponent’s clock grows, i.e., the
  pursuer tries to replan while the evader is ‘asleep’; dwell also shortens with
  increasing distance, reflecting higher urgency when far. Control magnitude
  rises to saturation when the opponent’s clock is high, again exploiting periods
  when the evader has just committed.
\begin{figure}[htbp]
    \centering
    \includegraphics[width=\linewidth]{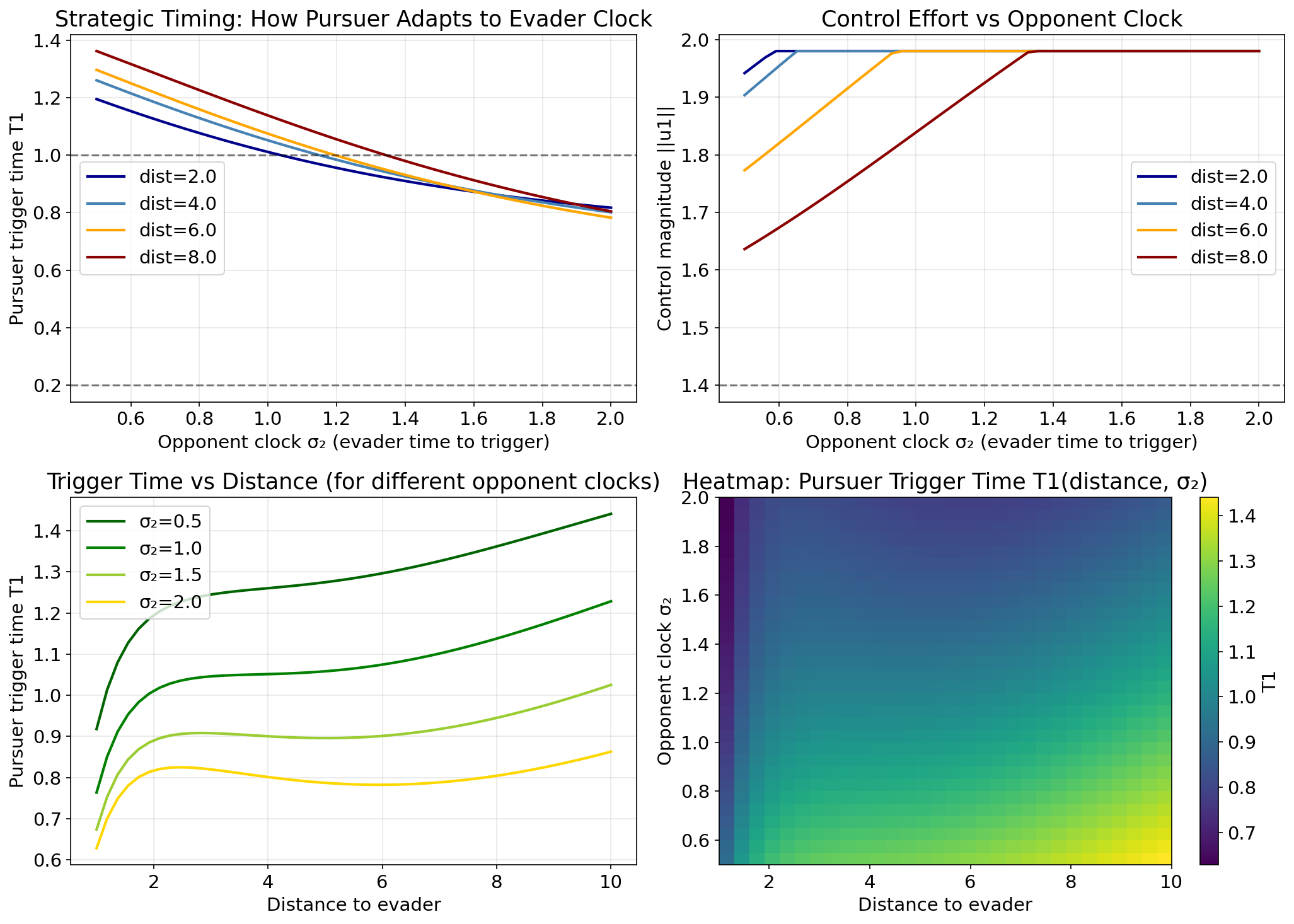}
    \caption{Trigger policy sensitivity of the learned pursuer vs. opponent clock $\sigma_2$ and distance.}
    \label{fig:trigger-policy}
\end{figure}

Fig.~\ref{fig:pursuit-comparison} compares the continuous LQ–Nash baseline (no
  triggers) with the learned Stackelberg, trigger-aware pursuer. The Stackelberg
  pursuer still achieves capture, but with additional triggers and a longer
  horizon when trigger costs are high, highlighting the trade-off between timing-
  aware updates and resource expenditure.

\begin{figure}[htbp]
    \centering
    \includegraphics[width=.6\linewidth]{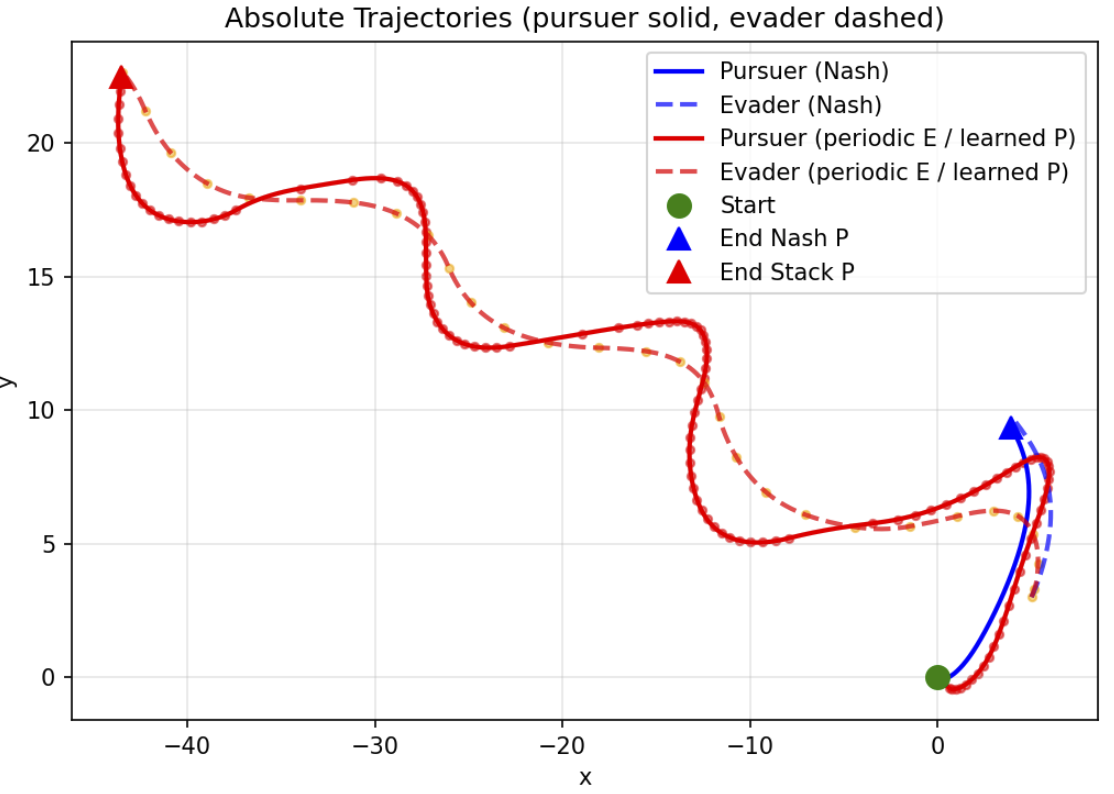}
    \caption{Learned pursuer (solid red) vs. evader (dashed red) compared to Nash solution (blue). Triggers are marked along the trajectory.}
    \label{fig:pursuit-comparison}
\end{figure}

\section{Conclusion}

We have developed a rigorous framework for self-triggered two-player games on PDMPs. By lifting the state to include dynamic clocks and committed controls, we characterized the hybrid geometry where players strategically trade off dwell-time costs against performance. Under the observable sequential commitment (Stackelberg) assumption, we derived a tractable dynamic programming principle, proving that the follower’s response constitutes a contraction mapping. The pursuit-evasion case study illustrated the strategic interaction between the players, revealing how the optimal policy adapts dwell times to exploit the adversary's ``lock-in'' periods.

Future research directions include addressing the blind-commitment Nash case via iterative relaxation methods, incorporating partial observability where the opponent's clock state must be inferred from physical measurements, and performing Large-scale empirical validation on high-fidelity CPS benchmarks, such as power grid frequency control or autonomous fleet coordination.

\bibliographystyle{abbrv}
\bibliography{ifacconf}

\end{document}